%
%
%
%

\documentclass[runningheads, envcountsame, a4paper]{llncs} 

\usepackage[T1]{fontenc}
\usepackage[utf8]{inputenc}
\usepackage{lmodern}
\usepackage{comment}

\usepackage[english]{babel}
\usepackage{microtype}

\usepackage{graphicx}
\usepackage{xcolor}

\usepackage{amsmath}
\usepackage{amsfonts,amssymb}

\newcommand{\N}{\mathbb{N}}

\newcommand{\Q}{\mathbb{Q}}
\newcommand{\R}{\mathbb{R}}

\newcommand{\BAfL}{\mathsf{BAfL}}

\newcommand{\AfL}{\mathsf{AfL}}

\newcommand{\SL}{\mathsf{SL}}

\newcommand{\REG}{\mathsf{REG}}

\usepackage{enumitem}
\setlist[itemize]{topsep=0pt,itemsep=-1ex,partopsep=1ex,parsep=1ex}

\newcommand{\EquationSpaceGlobal}{6pt}
\AtBeginDocument{
  \setlength{\belowdisplayskip}{\EquationSpaceGlobal} 
  \setlength{\belowdisplayshortskip}{\EquationSpaceGlobal}
  \setlength{\abovedisplayskip}{\EquationSpaceGlobal} 
  \setlength{\abovedisplayshortskip}{\EquationSpaceGlobal}
}

\newtheorem{coro}[theorem]{Corollary}

\newcommand{\A}{\mathcal{A}}
\newcommand{\B}{\mathcal{B}}
\newcommand{\C}{\mathcal{C}}
\newcommand{\D}{\mathcal{D}}

\begin{document}

\mainmatter  

\title{On the computational power of affine automata}

\titlerunning{On the computational power of affine automata}

%
%
\author{Mika Hirvensalo\inst{1}\fnmsep\inst{4}\fnmsep\thanks{Partially supported by V\"ais\"al\"a Foundation}%
\and Etienne Moutot\inst{2}\fnmsep\thanks{Partially supported by TUCS COM${}^3$-project and ANR project CoCoGro (ANR-16-CE40-0005)}\and Abuzer Yakary\i lmaz\inst{3}\fnmsep\thanks{Partially supported by TUCS COM${}^3$-project and ERC Advanced Grant MQC}
}
\authorrunning{M. Hirvensalo, E. Moutot, and A. Yakary\i lmaz}

\institute{Department of Mathematics and Statistics, University of Turku, FI-20014 Turku, Finland
\email{mikhirve@utu.fi}
\and { 
LIP, ENS de Lyon – CNRS – INRIA – UCBL – Universit\'e de Lyon
, \'Ecole Normale Sup\'erieure de Lyon, 69007 Lyon, France \email{etienne.moutot@ens-lyon.org}}
\and {Faculty of Computing, University of Latvia, Raina bulv. 19, R\={\i}ga, LV-1586, Latvia
\email{abuzer@lu.lv}
}
\and Turku Centre for Computer Science (TUCS)}

%
%

\toctitle{On the computational power of affine automata}
\tocauthor{Mika Hirvensalo, Etienne Moutot, and Abuzer Yakary\i lmaz}
\maketitle

\begin{abstract} We investigate the computational power of affine automata (AfAs) introduced in \cite{Yaka16}. In particular, we present a simpler proof for how to change the cutpoint for any affine language and a method how to reduce error in bounded error case. 
 Moreover, we address to the question of \cite{Yaka16} by showing that any affine language can be recognized by an AfA with certain limitation on the entries of affine states and transition matrices. Lastly, we present the first languages shown to be not recognized by AfAs with bounded-error. 
\keywords{non-classical models of automata, affine automata, cutpoint languages, bounded error, compact sets, error reduction}
\end{abstract}

\section{Introduction}
Finite automata are interesting computational models because of their simplicity, compared to more complex models like pushdown automata or Turing machines.
They also represent a very concrete restriction on computation: they only have a finite memory.
A lot of different automata models have been studied during the years, such as deterministic \cite{Sipser}, probabilistic \cite{Paz} and quantum \cite{Ambainis2015} ones.
All these models share two common features: the state vector set is compact and the acceptance function can be interpreted as linear. The linearity is desirable because of mathematical simplicity, but on the other hand, it may represent a limitation on the computational power. 

Jeandel \cite{Jeandel2007} demonstrated that in the bounded-error acceptance model the finite automata with compact state set accept only regular languages. Hence the compactness property of the state set may be one very important limiting the computational power, but since most known models have a compact state set, it remains open how much the compactness of the state set actually contributes.

Recently, A. Díaz-Caro and A. Yakary\i lmaz introduced a new model, called {\em affine automata} \cite{Yaka16}, also investigated in \cite{Yaka162} and \cite{Yaka163}. It is a purely theoretical model, which means that it cannot be implemented by a physical device like quantum automata. But it allows us to investigate on the power of interference caused by negatives amplitudes in the computation, like in the quantum case. Moreover, this model allows us to study the effect of state set compactness, since unlike quantum automata, affine ones have an unbounded state set. In addition, the final operation corresponding to quantum measurement cannot be interpreted as linear, but it is analogous to renormalization in Kondacs-Watrous \cite{KW97} and Latvian \cite{Ambainis06} quantum automata models.

In this paper, we present some stability results (Section \ref{section:stability}): we show how to obtain a new AfA from two AfAs by tensoring and direct sum. Then, we present a simpler proof for how to change the cutpoint for any affine language and an error reduction method in bounded error case. 

Any entry of an affine state or a transition matrix can be arbitrarily away from zero. Here, by addressing to the question of \cite{Yaka16}, we show that (Section \ref{section:equivalent}) any affine language can be recognized by an AfA with the restriction that all the entries of transition matrices are in the interval $ [-1,1] $. We also show that by an additional state we can guarantee that any AfA can start its computation from the first deterministic state.

Finally, we present (Section \ref{sec:non-affine}) the first languages shown not to be recognized by any bounded-error AfA. 

\newcommand{\estr}{\varepsilon}

\section{Preliminaries}
\label{section:def}

We denote the input alphabet $ \Sigma $ and the empty string $ \estr  $.

Probabilistic automata are a generalization of deterministic finite automata that can make random choices \cite{Rabin}. 
Formally, 
a {\em probabilistic finite automaton} (PFA) $P$ is a 5-tuple
$
P = (E, \Sigma, \{M_x ~|~ x\in\Sigma\}, e_s, E_a),
$ 
where $E = \{e_1, \dots, e_k\}$ is the finite set of states of $P$,
$\{M_x ~|~ x\in\Sigma \}$ is the set of stochastic transition matrices (all their coefficients are real numbers in $[0,1]$ and their columns sums up to $1$), $ v_0$ is the initial probabilistic state (the probability distribution on the states), and $E_a\subseteq E$ is the set of accepting states.
The computation starts in $ v_0 $, and then the given input, say $w = w_1 \cdots w_n \in \Sigma^* $ for some $n > 0$, is read once from left to right symbol by symbol. For each symbol the corresponding transition matrix is applied:
$
\vec{v}_f =M_w \vec{v}_0 = M_{w_n} \cdots M_{w_1} \vec{v}_0.
$
Remark that if $ w = \varepsilon $, $ v_f = v_0 $.
The {\em accepting probability}  of $P$ on $w$ is given by
\begin{equation}\label{finalvalue01}
f_P(w) = \vec{p}M_w \vec{v}_0,
\end{equation}
  where $\vec{p} = 
  \begin{pmatrix}
    \delta_1 & \cdots & \delta_k
  \end{pmatrix} $ 
  and
  $\delta_i = 1$ if $e_i\in E_a$ and $0$ if $e_i\notin E_a$.

Affine automata are a generalization of PFAs allowing negative transition values. Only allowing negative values in the transition matrices does not add any power (generalized probabilistic automata are equivalent to usual ones \cite{Turakainen1969}), but affine automata introduces also a non-linear behaviour. The automaton acts like usual generalized probabilistic automaton until the last operation, a non-linear operation called \textit{weighting}. 

A vector $\vec{v}\in\R^n$ is an affine vector if and only if its coordinates sums up to $1$. A matrix $M$ is an affine matrix if and only if all its columns are affine vectors. 
Remark that
if $M$ and $N$ are affine matrices , then $MN$ is also an affine matrix. In particular, if $\vec{v}$ is an affine vector, then $M\vec{v}$ is also an affine vector.

Formally,  an {\em affine finite automaton} (AfA) $A$ is a 5-tuple
  \[
	A = (E, \Sigma, \{M_x ~|~ x\in\Sigma\}, \vec{v}_0, E_a)
	\] 
where all components exactly the same as for probabilistic automata by replacing stochastic property with affine one in the initial state and transition matrices.

As in PFAs, after reading a word $w=w_1 \cdots w_n$, the final state of $A$ is $\vec{v}_f = M_w \vec{v}_0 $ like in the probabilistic case, but the function $f_A:\Sigma^*\to[0,1]$ computed by $A$ is defined as
\begin{equation}
	\label{finalvalue02}
	f_A(w) =  \frac{\sum_{e_i\in E_a} |(v_f)_i|}{\sum_{e_i\in E} |(v_f)_i|},
\end{equation}
and referred as the {\em accepting value} of $A$ on $w$. Similar to projective measurements, we can rewrite Eq. (\ref{finalvalue02}) as given below. 
First, we define a projection matrix based on $ E_a $: 
 {\footnotesize $ 
  P_A = P=  
  \begin{pmatrix}
    \delta_1 &          &        & \\
             & \delta_2 &        & \\
             &          & \ddots & \\
             &          &        & \delta_n \\
  \end{pmatrix} ,
  $}
  $
  \text{ where }
  \delta_i = 
  \left\{
  \begin{matrix}
    1 \text{ if } e_i\in E_a\\
    0 \text{ otherwise}
  \end{matrix}
  \right.
  .$ 
\\
Then, we can denote $ f_A(\cdot) $ as
\begin{equation}\label{finalvalue03}
f_A(w) = \frac{|PM_w\vec{v}_0|}{|M_w \vec{v}_0|}  .
\end{equation}
Notice that the final value for PFA $ P $ (\ref{finalvalue01}) is defined as matrix product $\vec{v}_f\mapsto\vec{p}.\vec{v}_f$, which is a linear operation on $\vec{v}_f$. On the other hand, computing final value from $\vec{v}_f$ as in (\ref{finalvalue03}) involves nonlinear operations $\displaystyle\vec{v}_f\mapsto\frac{|P\vec{v}_f|}{|\vec{v}_f|}$ due to absolute value and normalization of affine states having length greater than 1.



Given a function $f:\Sigma^*\to[0,1]$ computed by an automaton (stochastic or affine), there are different ways of defining the language of an automaton. The natural one is as follows:
  \label{def:SL}
  A language $L\subseteq\Sigma^*$ is recognized by an automaton $A$ with cutpoint $\lambda$ if and only if 
  \[ L = \{ w\in\Sigma^* ~|~ f_A(w) > \lambda \} .\]
These languages are called cutpoint languages.	
In the case of probabilistic (resp. affine automata), the set of cut-point languages are called {\em stochastic languages} (resp. {\em affine languages}) and denoted by $\SL$ (resp. $\AfL$).
	

A stronger condition is to impose that accepted and rejected words are separated by a gap: the cutpoint is said to be isolated: 
A language $L$ is recognized by an automaton $A$ with {\em isolated cutpoint} $\lambda$ if and only if there exist $\delta>0$ such that $\forall w\in L, f_A(w) \geq \lambda+\delta$, and
$\forall w\notin L, f_A(w) \leq \lambda-\delta$.

As we shall see, for affine automata it is always possible to shift the cutpoint $\lambda\in(0,1)$
to $\lambda=\frac12$, and hence this notion of isolated cutpoint becomes equivalent to the bounded error recognition: Language $L\subseteq \Sigma^*$ is said to be recognized by an automaton $A$ with {\em bounded error} if and only if there exists $\varepsilon>0$ such that $\forall w\in L, f_A(w) \geq 1-\varepsilon$, and
$\forall w\notin L, f_A(w) \leq \varepsilon$.

The set of languages recognized with {\em bounded error} (or isolated cutpoint) affine automata is denoted by $\BAfL$.

A classical result by Rabin \cite{Rabin} shows that isolated cutpoint stochastic languages are regular (denoted $\REG$). Rabin's proof essentially relies on two facts: 1) the function mapping the final vector into $[0,1]$ is a contraction, and 2) the state vector set is bounded.

By modifying Rabin's proof, it is possible to show that also many quantum variants of stochastic automata obey the same principle \cite{BQAL} bounded-error property implies the regularity of the accepted languages. In fact, E. Jeandel generalized Rabin's proof by demonstrating that the compactness of the state vector set together with the continuity of the final function are sufficient to guarantee the regularity of the accepted language if the cutpoint is isolated \cite{Jeandel2007}.

  


In the affine case however, the vector states do not lie in a compact set, we cannot prove that $\BAfL = \REG$ like in the probabilistic (or even quantum) case \cite{Jeandel2007}. In fact, it is even the contrary: $\REG \subsetneq \BAfL$ \cite{Yaka16}.

We close this section by three basic facts.
The following three operations on the state sets will be useful, when constructing new automata from the existing ones:
  \begin{itemize}
    \item $\overline{E} = \{e_i ~|~ e_i \notin E\}$ the complement of $E$,
    \item $E_a \times E_b = \{ (e_{i}, e_j) ~|~ e_i \in E_a, e_j \in E_b \}$ the Cartesian product of $E_a$ and $E_b$,
    \item $E_a \cup E_b = \{ e_{i} ~|~ e_i \in E_a \text{ or } e_i \in E_b \}$ the union of $E_a$ and $E_b$.
  \end{itemize}
The following lemma shows how to formulate the above operations by using the formalism of projection matrices.
 
\begin{lemma}
  \label{lemma:projection_operations}
  Let $E$ be the set of all states, $E_a, E_b \subseteq E$ and $P_a$, $P_b$ be the projections associated to them. Then
  \begin{itemize}
    \item $P$ is the projection associated to the complement $\overline{E_a}$ if and only if $P = I - P_a$, and,
    \item $P$ is the projection associated to $E_a \times E_b$ if and only if $P = P_a \otimes P_b$.
  \end{itemize}
\end{lemma}
\begin{lemma}\label{lem02}
  Let $E$ be the set of all states, $E_a, E_b \subseteq E$, such that $E_a \cap E_b = \emptyset$. Let $P_a$ and $P_b$ be the projections associated to them. Then,
  \begin{itemize}
    \item $P$ is the projection associated to $E_a \cup E_b$ if and only if $P = P_a + P_b$, and,
    \item for any matrix $M$ and vector $\vec{v}$, $|PM\vec{v}| = |P_aM\vec{v}|+|P_bM\vec{v}|$.
  \end{itemize}
\end{lemma}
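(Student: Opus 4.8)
The plan is to verify both items by a coordinatewise comparison, exploiting the fact that each projection $P_a$, $P_b$ is a diagonal $0/1$ matrix whose nonzero entries are supported exactly on the indices of $E_a$ and $E_b$. The disjointness hypothesis $E_a \cap E_b = \emptyset$ makes these two supports disjoint, and this single fact drives everything.

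For the first item, I would compare the $i$-th diagonal entries of $P_a + P_b$ with those of the projection $P$ associated to $E_a \cup E_b$ (all off-diagonal entries vanish on both sides). For each $i$ exactly one of three mutually exclusive cases holds: $e_i \in E_a$, $e_i \in E_b$, or $e_i \notin E_a \cup E_b$, the overlap case $e_i \in E_a \cap E_b$ being excluded by disjointness. In the first two cases precisely one of $(P_a)_{ii}, (P_b)_{ii}$ equals $1$, matching $(P)_{ii} = 1$; in the third all entries are $0$, matching $(P)_{ii} = 0$. Hence $P = P_a + P_b$, and the converse follows by reading the support of $P_a + P_b$ back off as $E_a \cup E_b$. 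I would also stress that disjointness is essential here: on an overlapping index the sum would equal $2$ and would fail to be a projection.

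For the second item, I would set $\vec{u} = M\vec{v}$ and expand the norm $|\cdot|$ as the sum $\sum_i |u_i|$ of absolute values of coordinates. Applying a $0/1$ diagonal projection merely restricts this sum to the coordinates in its support, so $|P\vec{u}| = \sum_{e_i \in E_a \cup E_b} |u_i|$. Because the two supports are disjoint, this sum splits as $\sum_{e_i \in E_a}|u_i| + \sum_{e_i \in E_b}|u_i| = |P_a\vec{u}| + |P_b\vec{u}|$, and substituting $\vec{u}=M\vec{v}$ yields the identity. Equivalently, having established $P = P_a + P_b$ in the first item, one observes that $P_a\vec{u}$ and $P_b\vec{u}$ have disjoint supports, so no coordinate contributes to both terms and their $\ell^1$-norms simply add.

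Since the statement is an essentially mechanical identity, I do not expect a genuine obstacle; the only point demanding attention is the bookkeeping around the disjointness hypothesis, which is exactly what rules out the forbidden diagonal value $2$ in the first item and prevents double counting of coordinates in the second.
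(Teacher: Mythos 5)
Your proposal is correct and follows essentially the same route as the paper's own (commented-out) proof: disjointness of $E_a$ and $E_b$ ensures the $1$'s of $P_a$ and $P_b$ never occupy the same diagonal position, giving $P = P_a + P_b$, and the $\ell^1$-norm $|PM\vec{v}| = \sum_{e_i\in E_a\cup E_b}|(M\vec{v})_i|$ splits into the two sub-sums over $E_a$ and $E_b$. Nothing is missing; your extra remark about why disjointness is essential is a harmless elaboration beyond what the paper records.
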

\begin{lemma}
  \label{lemma:affine_prod}
  If $A$ and $B$ are affine matrices, then $A\otimes B$ is also affine. Moreover, $|A\otimes B| = |A||B|$.
\end{lemma}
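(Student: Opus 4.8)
The plan is to read off both claims from the entrywise description of the Kronecker product, namely that each entry of $A\otimes B$ is a single product of an entry of $A$ with an entry of $B$, combined with the column-sum characterisation of affineness. Throughout I read $|\cdot|$ as the entrywise $\ell_1$-norm used in Eq.~(\ref{finalvalue03}) and Lemma~\ref{lem02}, extended from vectors to matrices in the obvious way: $|\vec{v}| = \sum_i |v_i|$ and $|M| = \sum_{i,j}|M_{ij}|$. Fixing this convention is the only real preliminary; after that the two assertions decouple cleanly.

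First I would establish affineness. The cleanest route is column-wise: if $\vec{a}_j$ denotes the $j$-th column of $A$ and $\vec{b}_l$ the $l$-th column of $B$, then the column of $A\otimes B$ indexed by the pair $(j,l)$ is exactly $\vec{a}_j\otimes\vec{b}_l$. Its entry sum factors, $\sum_{i,k}(a_j)_i (b_l)_k = \big(\sum_i (a_j)_i\big)\big(\sum_k (b_l)_k\big) = 1\cdot 1 = 1$, since $A$ and $B$ are affine means each of their columns sums to $1$. As every column of $A\otimes B$ sums to $1$, the matrix $A\otimes B$ is affine. (Equivalently, one can phrase this operator-theoretically: affineness of $M$ is $\mathbf{1}^{T}M=\mathbf{1}^{T}$, and then $\mathbf{1}^{T}(A\otimes B)=(\mathbf{1}^{T}A)\otimes(\mathbf{1}^{T}B)=\mathbf{1}^{T}$ by the mixed-product rule together with $\mathbf{1}\otimes\mathbf{1}=\mathbf{1}$; I would use whichever is more uniform with the surrounding notation.)

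For the norm identity I would compute directly. Since $(A\otimes B)_{(i,k),(j,l)} = A_{ij}B_{kl}$, taking absolute values gives $|A_{ij}B_{kl}| = |A_{ij}|\,|B_{kl}|$, and summing over all four indices separates into a product of two independent sums, so that $|A\otimes B| = \sum_{i,j,k,l}|A_{ij}|\,|B_{kl}| = \big(\sum_{i,j}|A_{ij}|\big)\big(\sum_{k,l}|B_{kl}|\big) = |A|\,|B|$. There is no genuine obstacle in this lemma; the only point demanding care is the index bookkeeping of the Kronecker product, specifically checking that each ordered pair $(A_{ij},B_{kl})$ arises exactly once as a block entry so that the double sum factors without over- or under-counting. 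Once the convention for $|\cdot|$ is pinned down and the block structure is written out, both parts are immediate.
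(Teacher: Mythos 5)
Your proof is correct. The paper states this lemma without any proof at all (it is listed among ``three basic facts'' closing the preliminaries), and your two computations --- the column-sum factorisation for affineness and the entrywise absolute-value sum factorising as $\bigl(\sum_{i,j}|A_{ij}|\bigr)\bigl(\sum_{k,l}|B_{kl}|\bigr)$ for the norm identity --- are exactly the standard argument the authors are implicitly relying on, including in the form $|\vec{u}\otimes\vec{v}|=|\vec{u}||\vec{v}|$ used later in Proposition~\ref{prop:direct-sum}.
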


\section{Stability Results}
\label{section:stability}
The main results of this section are stability results.
The first are about the functions of affine automata. They provide a way to prove an error reduction theorem. We then use this theorem to show the stability of bounded-error affine languages under intersection and union.


\begin{proposition}
\label{th:affine_product}
  Let $f$, $g$ be functions computed by affine automata, then there exists an affine automaton $\C$ such that $f_\C = f\times g$.
\end{proposition}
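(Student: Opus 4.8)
The plan is to use a tensor product construction. Let $A = (E_A, \Sigma, \{M_x^A\}, \vec{v}_0^A, E_a^A)$ and $B = (E_B, \Sigma, \{M_x^B\}, \vec{v}_0^B, E_a^B)$ be affine automata computing $f$ and $g$ respectively. I would define $\C$ by setting its state set to $E_A \times E_B$, its transition matrices to $M_x^\C = M_x^A \otimes M_x^B$ for each $x \in \Sigma$, its initial state to $\vec{v}_0^\C = \vec{v}_0^A \otimes \vec{v}_0^B$, and its accepting states to $E_a^A \times E_a^B$. First I would check that $\C$ is a legitimate affine automaton: by Lemma \ref{lemma:affine_prod} the tensor product of affine matrices is affine, so each $M_x^\C$ is affine and $\vec{v}_0^\C$ is an affine vector.

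Next I would compute the final state. Using the mixed-product property $(M \otimes N)(P \otimes Q) = (MP) \otimes (NQ)$, reading a word $w$ gives $M_w^\C \vec{v}_0^\C = (M_w^A \vec{v}_0^A) \otimes (M_w^B \vec{v}_0^B)$. Writing $\vec{v}_f^A$ and $\vec{v}_f^B$ for these two factors, the final state of $\C$ factorizes as $\vec{v}_f^\C = \vec{v}_f^A \otimes \vec{v}_f^B$. By Lemma \ref{lemma:projection_operations} the projection associated to $E_a^A \times E_a^B$ is $P_\C = P_A \otimes P_B$, so the numerator of $f_\C(w)$ becomes $|(P_A \otimes P_B)(\vec{v}_f^A \otimes \vec{v}_f^B)| = |(P_A \vec{v}_f^A) \otimes (P_B \vec{v}_f^B)|$, again by the mixed-product rule.

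Finally, the crux of the argument is the multiplicativity of the norm under tensor product, $|X \otimes Y| = |X|\,|Y|$, which is the second assertion of Lemma \ref{lemma:affine_prod}. Applying it to both the numerator and the denominator of Eq.~(\ref{finalvalue03}) yields
\[
f_\C(w) = \frac{|P_A \vec{v}_f^A|\,|P_B \vec{v}_f^B|}{|\vec{v}_f^A|\,|\vec{v}_f^B|} = \frac{|P_A \vec{v}_f^A|}{|\vec{v}_f^A|}\cdot\frac{|P_B \vec{v}_f^B|}{|\vec{v}_f^B|} = f_A(w)\,f_B(w),
\]
which is exactly $f \times g$.

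I do not anticipate a genuine obstacle, since the whole construction collapses to two algebraic facts about tensor products. The one point deserving care, and the real reason the proof works at all, is that the nonlinear normalization in the acceptance function (\ref{finalvalue03}) does not a priori respect products; it is only because the $\ell_1$-type norm $|\cdot|$ splits multiplicatively across $\otimes$, as recorded in Lemma \ref{lemma:affine_prod}, that the two normalization factors separate and the accepting value of $\C$ factors into the product of the individual accepting values. This multiplicativity is precisely what would fail for a genuinely nonlinear acceptance rule lacking such a tensor-compatibility property, so it is the step I would present most explicitly.
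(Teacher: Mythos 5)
Your proof is correct and follows exactly the route the paper intends: the paper's own proof is a one-line sketch ("same as the stochastic case, relies on the tensor product property of Lemma \ref{lemma:affine_prod}"), and your write-up is precisely the elaboration of that construction, with the mixed-product identity and the multiplicativity $|A\otimes B|=|A||B|$ doing the work. You have correctly identified the one nontrivial point, namely that the nonlinear normalization factors through the tensor product only because the norm is multiplicative.
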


\begin{proof}
  The proof is the same as the stochastic case and essentially relies on the property of tensor product of Lemma \ref{lemma:affine_prod}.
\qed
\end{proof}
It is easy to design a 2-state PFA $ P $ such that $ f_P : \Sigma^* \rightarrow \alpha $ for $ \alpha \in [0,1] $. Thus:
\begin{coro}
  Let $f$ be a function computed by an AfA and $\alpha\in[0,1]$, then there exists an AfA $\C$ such that $f_\C = \alpha f$.
\end{coro}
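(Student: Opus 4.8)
The plan is to produce the scalar $\alpha$ as the output of a trivial affine automaton and then multiply it into $f$ via Proposition \ref{th:affine_product}. First I would construct a $2$-state automaton $\B$ computing the constant function $g \equiv \alpha$: take states $\{e_1,e_2\}$, let every transition matrix be the identity $I$, set the initial (stochastic) state to $\vec{v}_0 = \begin{pmatrix}\alpha\\1-\alpha\end{pmatrix}$, and put $E_a=\{e_1\}$. Because $\alpha\in[0,1]$, the vector $\vec{v}_0$ is a bona fide probability distribution and the identity transitions fix it, so for every input $w$ the final state is $\vec{v}_f=\vec{v}_0$. This $\B$ is in particular a PFA, hence an AfA.

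Next I would verify that $\B$ computes $\alpha$ under the affine acceptance rule (\ref{finalvalue03}). Since all coordinates of $\vec{v}_f$ are nonnegative, the absolute values are inert and $f_\B(w)=\frac{|P\vec{v}_0|}{|\vec{v}_0|}=\frac{\alpha}{\alpha+(1-\alpha)}=\alpha$ independently of $w$. Hence $g=f_\B$ is a function computed by an AfA whose constant value is $\alpha$; this is exactly the remark preceding the statement.

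Finally, I would apply Proposition \ref{th:affine_product} to $f$ and this $g$, obtaining an AfA $\C$ with $f_\C=f\times g=\alpha f$, as required; the output stays in $[0,1]$ because $\alpha, f(w)\in[0,1]$. The only genuine content here is the constant-function gadget together with the observation that its nonnegative final state makes the nonlinear affine acceptance formula coincide with the linear PFA value; everything else is an immediate appeal to the product proposition, so I do not expect any real obstacle beyond this elementary check.
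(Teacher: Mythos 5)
Your proposal is correct and matches the paper's own argument essentially verbatim: the paper also builds the $2$-state PFA with identity transitions, initial state $(\alpha,1-\alpha)^T$ and accepting set $\{e_1\}$ to realize the constant function $\alpha$, and then invokes Proposition~\ref{th:affine_product}. No gaps.
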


\begin{proposition}
	\label{prop:direct-sum}
  Let $f$, $g$ be functions computed by some AfAs and $\alpha, \beta \geq 0$ such that $\alpha+\beta=1$, then there exists an AfA $\C$ such that $f_\C = \alpha f+ \beta g$.
\end{proposition}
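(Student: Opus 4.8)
The plan is to realize $\alpha f + \beta g$ by a \emph{direct sum} of two automata, but with a correction that forces the two normalizing denominators to coincide. Write $\A = (E_A, \Sigma, \{A_x\}, \vec{u}_0, E_{A,a})$ for an AfA computing $f$ and $\B = (E_B, \Sigma, \{B_x\}, \vec{w}_0, E_{B,a})$ for one computing $g$, with associated projections $P_A, P_B$. The naive attempt is to stack the two state sets, use the block-diagonal transition matrices $A_x \oplus B_x$, and start from the vector with top block $\alpha\vec{u}_0$ and bottom block $\beta\vec{w}_0$, which is affine since $\alpha + \beta = 1$. Reading $w$ sends this to the vector with blocks $\alpha A_w\vec{u}_0$ and $\beta B_w\vec{w}_0$, and with the union projection $P_A \oplus P_B$ one obtains, via Lemma \ref{lem02}, the accepting value $\frac{\alpha|P_AA_w\vec{u}_0| + \beta|P_BB_w\vec{w}_0|}{\alpha|A_w\vec{u}_0| + \beta|B_w\vec{w}_0|}$. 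This is \emph{not} $\alpha f + \beta g$, because the two summands $|A_w\vec{u}_0|$ and $|B_w\vec{w}_0|$ in the denominator differ and do not factor out. Identifying and repairing this mismatch is the crux of the argument.

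To repair it, I would first tensor the two machines so that both halves of the direct sum share a \emph{common} denominator. Consider the product dynamics $C_x = A_x \otimes B_x$ on $E_A \times E_B$ with initial vector $\vec{u}_0 \otimes \vec{w}_0$; by Lemma \ref{lemma:affine_prod} this is affine, the final vector is $A_w\vec{u}_0 \otimes B_w\vec{w}_0$, and its norm is $|A_w\vec{u}_0|\,|B_w\vec{w}_0|$. Crucially, reading the value off this product with projection $P_A \otimes I$ gives $\frac{|P_AA_w\vec{u}_0|\,|B_w\vec{w}_0|}{|A_w\vec{u}_0|\,|B_w\vec{w}_0|} = f(w)$, whereas projection $I \otimes P_B$ gives $g(w)$, and both now have the \emph{same} denominator $|A_w\vec{u}_0|\,|B_w\vec{w}_0|$. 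I then take $\C$ to be the direct sum of two copies of this product automaton: state set $(E_A\times E_B) \sqcup (E_A\times E_B)$, block-diagonal transitions $C_x \oplus C_x$, initial vector with top block $\alpha(\vec{u}_0\otimes\vec{w}_0)$ and bottom block $\beta(\vec{u}_0\otimes\vec{w}_0)$, and projection $(P_A\otimes I) \oplus (I\otimes P_B)$.

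Finally I would verify $f_\C = \alpha f + \beta g$ directly. The final vector of $\C$ on $w$ has top block $\alpha(A_w\vec{u}_0\otimes B_w\vec{w}_0)$ and bottom block $\beta(A_w\vec{u}_0\otimes B_w\vec{w}_0)$; applying Lemma \ref{lem02} to split the norm across the two disjoint blocks, the numerator becomes $\alpha|P_AA_w\vec{u}_0|\,|B_w\vec{w}_0| + \beta|A_w\vec{u}_0|\,|P_BB_w\vec{w}_0|$ and the denominator becomes $(\alpha+\beta)|A_w\vec{u}_0|\,|B_w\vec{w}_0| = |A_w\vec{u}_0|\,|B_w\vec{w}_0|$. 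Dividing, the common factor cancels termwise and leaves exactly $\alpha f(w) + \beta g(w)$. It then remains only to check the bookkeeping that $\C$ is a legitimate AfA: that $C_x \oplus C_x$ is affine (each column, padded with zeros, still sums to $1$) and that the initial vector is affine (its entries sum to $\alpha + \beta = 1$), both routine. The one genuinely non-obvious point, and the step I expect to be the real obstacle, is recognizing that the plain direct sum fails and that tensoring against the \emph{other} machine is precisely what equalizes the denominators; once that is seen, the verification is mechanical.
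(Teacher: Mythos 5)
Your construction is exactly the one in the paper: two copies of the tensor product $\A\otimes\B$ run in parallel as a direct sum, with initial blocks weighted by $\alpha$ and $\beta$ and projections $P_A\otimes I$ and $I\otimes P_B$ so that both halves share the common denominator $|A_w\vec{u}_0|\,|B_w\vec{w}_0|$. The verification matches the paper's computation, so the proposal is correct and takes essentially the same approach.
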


\begin{proof}
	Let $\A = (E^A, \Sigma, \{A_x\}, \vec{v}_0^A, E_a^A)$ and $\B = (E^B, \Sigma, \{B_x\}, \vec{v}_0^B, E_a^B)$ two automata such that $f = f_\A$ and $g=f_\B$. 
    The idea here is to make two copies of $\A\otimes \B$ working in parallel, one having the final states of $\A$, the other the final states of $\B$. We define $\C = (E^C, \Sigma, \{C_x\}, \vec{v}_0^C, E_a^C)$ by:
    {\footnotesize
  \[C_x 
  = 
  \left(
  \begin{array}{c|c}
      \\
    \A_x \otimes B_x & 0  \\
      \\
    \hline
      \\
      0 & A_x \otimes B_x\\
      \\
  \end{array} 
  \right)
   ,
   \vec{v}_0^C = 
  \left(
  \begin{array}{c}
    \\
    \alpha (\vec{v}_0^A\otimes \vec{v}_0^B)\\
    \\
    \hline
    \\
    \beta (\vec{v}_0^A\otimes \vec{v}_0^B)\\
    \\
  \end{array} 
  \right)
   , 
   P^C = 
   \left(
  \begin{array}{c|c}
      \\
    P^A \otimes I_n & 0  \\
      \\
    \hline
      \\
      0 & I_k \otimes P^B\\
      \\
  \end{array} 
  \right)
   ,\] }
   with $P^A$, $P^B$ and $P^C$ be the projections on $E^A_a$, $E^B_a$ and $E^C_a$.
   Thus,
   \begin{align*}
     f_\C(w) 
     &= \frac{\alpha |(P^A \otimes I_n) (A_x \otimes B_x) (\vec{v}_0^A\otimes \vec{v}_0^B)| + \beta |(I_k \otimes P^B) (A_x \otimes B_x) (\vec{v}_0^A\otimes \vec{v}_0^B)|} 
       {(\alpha +\beta )|(A_x \otimes B_x) (\vec{v}_0^A\otimes \vec{v}_0^B)|} \\
    &= \alpha \frac{|P^A A_w\vec{v}_0^A|}{|A_w\vec{v}_0^A |} + \beta \frac{|P^B B_w \vec{v}_0^B|}{|B_w \vec{v}_0^B|} = \alpha f(w) + \beta g(w)
   .\end{align*} 
   \qed
\end{proof}

The first consequence of these stability results is a really short proof for shifting the cutpoint of an affine automaton. Although the construction in \cite{Yaka16} gives a much more compact automata in term of number of states, our construction is simpler, and does not require as many specific cases.

\begin{proposition}
  \label{th:cutpoint_shift}
  Let $\A$ be an affine automaton and $\lambda_1, \lambda_2 \in [0,1]$. There exists an affine automaton $\B$ such that 
  \begin{itemize}
    \item $f_\A(w) > \lambda_1 \Leftrightarrow f_\B(w) > \lambda_2$ and
    \item $f_\A(w) = \lambda_1 \Leftrightarrow f_\B(w) = \lambda_2$.
  \end{itemize}
\end{proposition}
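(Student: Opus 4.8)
The plan is to obtain $\B$ so that its accepting value is a fixed \emph{strictly increasing} affine reshaping of $f_\A$. Concretely, I will look for an affine map $g(x) = \alpha x + (1-\alpha)c$ with $\alpha \in (0,1]$ and $c \in [0,1]$ satisfying $g(\lambda_1) = \lambda_2$, and then realize $f_\B = g \circ f_\A = \alpha f_\A + (1-\alpha)c$. The point is that once $g$ is strictly increasing and $g(\lambda_1)=\lambda_2$, both required equivalences follow for free: $f_\A(w) > \lambda_1 \Leftrightarrow g(f_\A(w)) > g(\lambda_1) = \lambda_2$ and $f_\A(w) = \lambda_1 \Leftrightarrow g(f_\A(w)) = g(\lambda_1) = \lambda_2$. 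Thus no separate case analysis on the relative position of $\lambda_1$ and $\lambda_2$ is needed, which is the promised simplification over \cite{Yaka16}.

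\textbf{Building the map.} Treat the principal case $\lambda_1, \lambda_2 \in (0,1)$. If $\lambda_1 = \lambda_2$, take $\B = \A$ (the identity map). Otherwise I set
\[
\alpha = \min\left\{ \frac{\lambda_2}{\lambda_1}, \; \frac{1-\lambda_2}{1-\lambda_1} \right\}, \qquad c = \frac{\lambda_2 - \alpha\lambda_1}{1-\alpha}.
\]
A short check shows $\alpha \in (0,1]$ (exactly one of the two ratios can exceed $1$, and the minimum is then the other, which is strictly below $1$; when $\lambda_1=\lambda_2$ both equal $1$), so $\alpha > 0$ guarantees strict monotonicity. The two conditions $c \ge 0$ and $c \le 1$ translate precisely into $\alpha \le \lambda_2/\lambda_1$ and $\alpha \le (1-\lambda_2)/(1-\lambda_1)$, which hold by the choice of $\alpha$ as the minimum; hence $c \in [0,1]$ and $g(\lambda_1) = \alpha\lambda_1 + (1-\alpha)c = \lambda_2$ by construction. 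Since $g$ is a convex combination of $f_\A(w) \in [0,1]$ and $c \in [0,1]$, its values stay in $[0,1]$, so $g \circ f_\A$ is a legitimate accepting value.

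\textbf{Realizing it with an automaton.} Now I invoke the stability machinery. A constant function with value $c \in [0,1]$ is computed by a $2$-state PFA, hence by an AfA. Applying Proposition~\ref{prop:direct-sum} to $f = f_\A$, to this constant function $g \equiv c$, and to the weights $\alpha$ and $\beta = 1-\alpha$ (which are nonnegative and sum to $1$) produces an AfA $\B$ with $f_\B = \alpha f_\A + (1-\alpha)c$, exactly the map $g \circ f_\A$. Strict monotonicity of $g$ then delivers the two equivalences as above.

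\textbf{Main obstacle.} The construction itself is almost immediate once Proposition~\ref{prop:direct-sum} is in hand; the only genuine work is confirming that the chosen affine map is simultaneously order-preserving ($\alpha>0$) and range-preserving ($c\in[0,1]$) for \emph{all} admissible cutpoints. Taking $\alpha$ to be the minimum of the two ratios is precisely what makes these constraints compatible. The remaining delicate point is the boundary behaviour when $\lambda_1$ or $\lambda_2$ lies in $\{0,1\}$: some such combinations (for instance forcing $f_\B > 1$) are unsatisfiable for a general $\A$, so I would either restrict attention to $\lambda_1,\lambda_2 \in (0,1)$ or handle the consistent boundary cases directly by choosing $g$ as the appropriate one-sided affine map (e.g.\ $g(x) = \lambda_2 + (1-\lambda_2)x$ when $\lambda_1 = 0$), which the same argument covers.
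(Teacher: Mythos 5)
Your proposal is correct on the range of cutpoints where the statement is actually meaningful, and it follows the same basic route as the paper: compose $f_\A$ with a strictly increasing affine map $x \mapsto \alpha x + (1-\alpha)c$, realized as an automaton by mixing $\A$ with a constant-valued two-state automaton via Proposition~\ref{prop:direct-sum}. The difference is in the choice of parameters, and yours is a genuine improvement. The paper always mixes with the constant $1$, taking $\alpha = \frac{1-\lambda_2}{1-\lambda_1}$ (handling $\lambda_1 = 1$ separately by pure scaling $f_\B = \lambda_2 f_\A$). But that weight exceeds $1$ whenever $\lambda_2 < \lambda_1 < 1$, so $1-\alpha < 0$ and Proposition~\ref{prop:direct-sum} (which requires nonnegative weights summing to $1$) no longer applies; indeed the resulting ``function'' $\alpha f_\A + (1-\alpha)$ can take negative values, so it is not the accepting value of any AfA. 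The paper's proof silently skips this case. Your choice $\alpha = \min\left\{ \frac{\lambda_2}{\lambda_1}, \frac{1-\lambda_2}{1-\lambda_1} \right\}$ repairs exactly this: it reduces to the paper's construction when $\lambda_2 \geq \lambda_1$ (your $c$ then equals $1$) and to pure scaling $f_\B = \frac{\lambda_2}{\lambda_1} f_\A$ when $\lambda_2 \leq \lambda_1$ (your $c$ then equals $0$), with both cases covered by one verification that $\alpha \in (0,1]$ and $c \in [0,1]$. Finally, both your write-up and the paper share the caveat that the proposition as literally stated fails at some boundary pairs (for instance $\lambda_1 = 0$, $\lambda_2 = 1$ would force $f_\B > 1$); you flag this explicitly and restrict or treat the consistent boundary cases, which is the honest reading, whereas the paper's proof breaks down there without comment (its $\alpha$ becomes $0$, destroying strict monotonicity).
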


\begin{proof}
  First we suppose $\lambda_1 \neq 1$. Let $\B$ the automaton such that $f_\B = \alpha f_\A + (1-\alpha)1$, with $\alpha = \frac{1 - \lambda_2}{1 - \lambda_1}$.
  Then $f_\A > \lambda_1 \Rightarrow f_\B > \frac{(1 - \lambda_2)\lambda_1 + \lambda_2 - \lambda_1}{1-\lambda_1} = \lambda_2$. And one has the same with $=$ or $<$.
  
  For $\lambda_1 = 1$ it is even simpler, one has just to ``resize'' the function by taking $\B$ such that $f_\B = \lambda_2 f_\A$. And then, $f_\A = 1 \Rightarrow f_\B = \lambda_2$, and same for $<$.
\qed
\end{proof}

Using the same kind of construction we can prove that bounded-error mode, it is always possible to reduce the error. Reducing the error means increasing the gap between accepted and rejected words. The error probability could even be made as close to zero as one wants.

\begin{lemma}
  \label{lemma:poly_func}
  Let $f$ be a function computed by affine automaton, then there exists an affine automaton $\B$ such that $f_\B = f^2(3-2f)$.
\end{lemma}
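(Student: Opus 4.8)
The plan is to recognize the target polynomial as a ``majority vote'' and to realize it by a threefold tensor power of the given automaton. First I would rewrite
$$f^2(3-2f) = 3f^2 - 2f^3 = f^3 + 3f^2(1-f),$$
and observe that the right-hand side is exactly $\binom{3}{3}f^3 + \binom{3}{2}f^2(1-f)$, i.e.\ the probability that at least two out of three independent events, each of probability $f$, occur. This reformulation is the whole point: the awkward coefficient $3$ is not something one can produce with Proposition~\ref{prop:direct-sum} (which only forms convex combinations) nor with the scaling corollary (which only multiplies by a factor in $[0,1]$), but it will appear for free from the combinatorics of a tensor cube.

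Concretely, let $\A=(E,\Sigma,\{M_x\},\vec v_0,E_a)$ compute $f$, with associated projection $P_A$, and set $\bar P_A = I-P_A$, the projection onto the rejecting states (Lemma~\ref{lemma:projection_operations}). I would take $\B = \A\otimes\A\otimes\A$, i.e.\ the automaton with transition matrices $M_x\otimes M_x\otimes M_x$ and initial vector $\vec v_0\otimes \vec v_0\otimes \vec v_0$; by Lemma~\ref{lemma:affine_prod} this is again an affine automaton, and on input $w$ its final vector is $\vec v_f\otimes\vec v_f\otimes\vec v_f$ where $\vec v_f = M_w\vec v_0$. For its accepting set I would take all triples of states in which at least two of the three coordinates lie in $E_a$. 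Using Lemma~\ref{lemma:projection_operations}, the associated projection is
$$P = P_A\otimes P_A\otimes P_A + P_A\otimes P_A\otimes \bar P_A + P_A\otimes \bar P_A\otimes P_A + \bar P_A\otimes P_A\otimes P_A,$$
a sum of four projections onto pairwise disjoint sets of states (the patterns $AAA$, $AAR$, $ARA$, $RAA$).

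The computation then runs as follows. Writing $a=|P_A\vec v_f|$ and $\bar a = |\bar P_A \vec v_f|$, so that $a+\bar a = |\vec v_f|$ and $f_\A(w)=a/(a+\bar a)$, I would apply the additivity of Lemma~\ref{lem02} to split $|P(\vec v_f\otimes\vec v_f\otimes\vec v_f)|$ into the four terms above, and then the multiplicativity $|X\otimes Y\otimes Z| = |X||Y||Z|$ of Lemma~\ref{lemma:affine_prod} to evaluate each term. This gives numerator $a^3 + 3a^2\bar a$ and denominator $(a+\bar a)^3$, whence
$$f_\B(w) = \frac{a^3+3a^2\bar a}{(a+\bar a)^3} = f_\A(w)^3 + 3\, f_\A(w)^2\bigl(1-f_\A(w)\bigr) = f^2(3-2f).$$

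The only real obstacle is conceptual rather than computational: one must resist trying to assemble $3f^2-2f^3$ from Proposition~\ref{prop:direct-sum} and the scaling corollary, since the factor $3$ cannot be produced by convex combination or by $[0,1]$-scaling, and instead let it emerge from the three ``exactly two accepts'' patterns. The routine point to verify carefully is that these four state-patterns are pairwise disjoint, so that Lemma~\ref{lem02} legitimately turns the single norm $|P(\vec v_f\otimes\vec v_f\otimes\vec v_f)|$ into the sum of the four products.
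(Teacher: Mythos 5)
Your proposal is correct and follows essentially the same route as the paper: a threefold tensor power of $\A$ with the ``at least two of three accept'' state set, the same four pairwise-disjoint patterns, and the same use of additivity and multiplicativity of the norm to obtain $f^3+3f^2(1-f)$. (Your numerator $a^3+3a^2\bar a$ is in fact written more carefully than the paper's displayed formula, which has a small typo in the second term.)
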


\begin{proof}
  Let $\A = (E, \Sigma, \{A_x\}, \vec{v}_0, E_a)$ such that $f = f_\A$. The automaton $\B$ will run 3 copies of $\A$ in parallel, and its final states are made to accept if 2 or 3 copies of $\A$ accept and reject otherwise (i.e. taking the majority answer). Formally, $\B = (E\otimes E \otimes E, \Sigma, \{B_x\}, \vec{v}'_0, E_a')$ with
  \[ B_x = A_x \otimes A_x \otimes A_x ,\]
  \[ \vec{v}'_0 = \vec{v}_0 \otimes \vec{v}_0 \otimes \vec{v}_0 ,\]
  \[ E'_a =  \left(E_a \times E_a \times E_a\right) \cup \left(\overline{E_a} \times E_a \times E_a\right) \cup \left(E_a \times \overline{E_a} \times E_a\right) \cup \left(E_a \times E_a \times \overline{E_a}\right) .\]
  Note that the four sets in parenthesis are all pairwise disjoints.
  Let $P$ and $P'$ be the projections associated to $E_a$ and $E'_a$. Then,
  \[  P' = P \otimes P \otimes P + (I - P) \otimes P \otimes P + P \otimes (I - P) \otimes P + P \otimes P \otimes (I - P) .\]
  And by Lemma \ref{lemma:projection_operations},
  \begin{align*}
    f_\B(w) = \frac{|P'B_w\vec{v}'_0|}{|B_w\vec{v}'_0|} 
    &= \frac{|PA_w\vec{v}_0|^3 + 3 |PA_w\vec{v}_0| \left( |A_w\vec{v}_0|-|PA_w\vec{v}_0| \right)}{|A_w\vec{v}_0|^3} \\
    &= f(w)^3 + 3f(w)^2(1-f(w)) \\
    &= f(w)^2(3-2f(w))
    .
  \end{align*}
\qed
\end{proof}

\begin{proposition}[\textbf{error reduction}]
  \label{th:error_reduction}
  Let $L\in \BAfL$. There exists an affine automaton $\A$ such that:
  \begin{itemize}
    \item $\forall w\in L, f_\A(w) \geq \frac{3}{4}$
    \item $\forall w\notin L, f_\A(w) \leq \frac{1}{4}$
  \end{itemize}
\end{proposition}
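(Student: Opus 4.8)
The plan is to amplify the gap by repeatedly composing the majority polynomial supplied by Lemma~\ref{lemma:poly_func}. First I would unwind the definition of $\BAfL$: since $L$ is recognized with bounded error, there is an AfA computing a function $f$ and an $\varepsilon \in (0, \tfrac12)$ with $f(w) \geq 1-\varepsilon$ for $w \in L$ and $f(w) \leq \varepsilon$ for $w \notin L$. Writing $\delta = \tfrac12 - \varepsilon > 0$, this says exactly that accepting words satisfy $f(w) \geq \tfrac12 + \delta$ and rejecting words satisfy $f(w) \leq \tfrac12 - \delta$; equivalently, the cutpoint $\tfrac12$ is isolated with half-gap $\delta$. (If one starts instead from an isolated cutpoint $\lambda \neq \tfrac12$, Proposition~\ref{th:cutpoint_shift} first moves it to $\tfrac12$ without changing which words are accepted.)

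Next I would study the dynamics of the map $p(x) = x^2(3-2x)$, which is the function Lemma~\ref{lemma:poly_func} lets us realize on the level of automata. The relevant facts are elementary: $p'(x) = 6x(1-x) \geq 0$, so $p$ is nondecreasing on $[0,1]$; the fixed points in $[0,1]$ are exactly $0, \tfrac12, 1$; and $p$ is symmetric about $\tfrac12$ in the sense that $p(1-x) = 1 - p(x)$. Since $p'(\tfrac12) = \tfrac32 > 1$, the central fixed point is repelling, and in fact $p(x) - x = x(2x-1)(1-x)$ is strictly positive on $(\tfrac12, 1)$ and strictly negative on $(0, \tfrac12)$. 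Hence on $(\tfrac12, 1)$ the iterates $x, p(x), p(p(x)), \dots$ form an increasing sequence bounded by $1$, which must converge to the only fixed point above its starting value, namely $1$; by the symmetry $p(1-x) = 1-p(x)$ the same iteration drives $(0, \tfrac12)$ down to $0$.

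I would then build $\A$ by applying Lemma~\ref{lemma:poly_func} a fixed number $k$ of times, so that $f_\A = p^{(k)}(f)$ where $p^{(k)}$ denotes the $k$-fold composition. Because $p$ is nondecreasing, every accepting word has $f_\A(w) = p^{(k)}(f(w)) \geq p^{(k)}(\tfrac12 + \delta)$, and by the convergence above $p^{(k)}(\tfrac12 + \delta) \to 1$ as $k \to \infty$; I choose $k$ large enough that $p^{(k)}(\tfrac12 + \delta) \geq \tfrac34$. The symmetry then forces $p^{(k)}(\tfrac12 - \delta) = 1 - p^{(k)}(\tfrac12 + \delta) \leq \tfrac14$, so by monotonicity every rejecting word satisfies $f_\A(w) \leq \tfrac14$, which is exactly the claim.

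The one point needing care is that a single exponent $k$ must work simultaneously for all (infinitely many) words. This is precisely what the uniform half-gap $\delta$ buys us: monotonicity of $p$ reduces the infinitely many accepting values to the single worst case $\tfrac12 + \delta$, and the rejecting side to $\tfrac12 - \delta$, so it suffices to push these two points past $\tfrac34$ and $\tfrac14$ respectively. The resulting $k$ — and hence the number of tensored copies of the original automaton, which is $3^k$ — grows as $\delta \to 0$, but for fixed $L$ it is a finite constant, so $\A$ is a genuine AfA.
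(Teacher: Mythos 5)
Your proof is correct and follows exactly the route the paper intends: iterate the majority polynomial $x \mapsto x^2(3-2x)$ from Lemma~\ref{lemma:poly_func}, whose fixed points $0$ and $1$ attract everything off $\tfrac12$. The paper only gives a two-line sketch of this idea, and your write-up supplies precisely the details it omits (monotonicity, the symmetry $p(1-x)=1-p(x)$, and the uniform choice of $k$ via the isolation gap $\delta$), so there is nothing to correct.
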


\begin{proof} We do not detail the proof, but the idea is simple: mapping $x\to x^2(3-2x)$ has attracting points at $x=0$ and $x=1$. Iterating the mapping, any point $\ne \frac12$ will tend to $0$ or $1$.
\qed
\end{proof}

This technique could be applied to get any constant instead of $\frac{1}{4}$, to have an error bound as small as one wants.

This error reduction theorem also applies to probabilistic automata, but is not very interesting because in the probabilistic case it is known that bounded-error languages are exactly regular languages \cite{Jeandel2007}, and hence the error probability could always be 0. In our case, bounded-error languages are more complex than regular languages. But thanks to this error reduction, they are stable under union, intersection, and complement, just like regular languages.

\begin{proposition}
  Let $L_A, L_B\in \BAfL$. Then 
  \begin{itemize}
    \item $L_A \cup L_B \in \BAfL$,
    \item $L_A \cap L_B \in \BAfL$,
    \item $\overline{L_A}\in \BAfL$.
  \end{itemize}
\end{proposition}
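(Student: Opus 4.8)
The plan is to dispatch the three closure properties in the order complement, intersection, union, since the complement is essentially free, the intersection is the substantive case, and the union then follows by De Morgan's laws (I will also note a direct argument). Throughout I would lean on three facts already established: that the product of two affine functions is affine (Proposition~\ref{th:affine_product}), that error can be reduced below any constant (Proposition~\ref{th:error_reduction}), and the projection calculus of Lemma~\ref{lemma:projection_operations}.

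For the complement, I would take an AfA $\A$ recognizing $L_A$ with bounded error and simply interchange its accepting and non-accepting states. By Lemma~\ref{lemma:projection_operations}, replacing the projection $P$ by $I-P$ produces an automaton whose accepting value is $\frac{|(I-P)M_w\vec{v}_0|}{|M_w\vec{v}_0|} = 1 - f_\A(w)$. Hence if $\A$ separates $L_A$ with gap $\varepsilon$, the modified automaton separates $\overline{L_A}$ with the same gap, so $\overline{L_A}\in\BAfL$.

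For the intersection, I would first invoke Proposition~\ref{th:error_reduction} (and the remark that the error can be driven below any constant) to obtain AfAs computing $f_A$ and $f_B$ with a common error bound $\varepsilon < \frac{1}{4}$, so that $f_A(w),f_B(w)\geq 1-\varepsilon$ on their languages and $\leq \varepsilon$ outside. By Proposition~\ref{th:affine_product} there is an AfA $\C$ with $f_\C = f_A\cdot f_B$. On $L_A\cap L_B$ both factors are at least $1-\varepsilon$, giving $f_\C \geq (1-\varepsilon)^2$; off $L_A\cap L_B$ at least one factor is at most $\varepsilon$ while the other lies in $[0,1]$, giving $f_\C \leq \varepsilon$. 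Since $\varepsilon < \frac{1}{4}$ forces $(1-\varepsilon)^2 > \frac{1}{2} > \varepsilon$, the cutpoint $\frac{1}{2}$ is isolated and $L_A\cap L_B\in\BAfL$. For the union, the quickest route is $L_A\cup L_B = \overline{\overline{L_A}\cap\overline{L_B}}$, which lies in $\BAfL$ by the two cases just settled; alternatively one applies the same product construction to the complement functions, realizing $1-(1-f_A)(1-f_B)$, which is within $\varepsilon$ of $1$ on $L_A\cup L_B$ and at most $2\varepsilon$ off it.

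The only subtlety, and hence the main obstacle, is that multiplying two bounded-error functions amplifies the error on one side (from $\varepsilon$ to roughly $2\varepsilon$), so the raw product need not separate about $\frac{1}{2}$ unless its inputs are already sufficiently accurate. The purpose of invoking error reduction \emph{before} taking products is precisely to shrink $\varepsilon$ in advance, after which the isolation of the cutpoint $\frac{1}{2}$ falls out of the elementary inequality $(1-\varepsilon)^2 > \frac{1}{2}$ with no further estimation required.
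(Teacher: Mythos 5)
Your proof is correct, and for the complement and the intersection it is essentially the paper's own argument: error-reduce via Proposition~\ref{th:error_reduction}, form the product automaton of Proposition~\ref{th:affine_product} to realize $f_\A f_\B$ (with the same $9/16$ versus $1/4$ separation), and swap accepting states to realize $1-f_\A$. The one place you diverge is the union: the paper constructs it directly as a convex combination, using the direct-sum automaton of Proposition~\ref{prop:direct-sum} to realize $f_\C=\frac12(f_\A+f_\B)$ and checking that this gives at least $3/8$ on $L_A\cup L_B$ and at most $1/4$ off it, whereas you obtain the union from De Morgan's law (or, equivalently, from the product applied to the complemented automata, realizing $1-(1-f_\A)(1-f_\B)$). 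Both routes are valid; the paper's yields an explicit union automaton whose analysis is symmetric with the intersection case, while yours has the small economy of not needing the direct-sum construction at all, deriving all three closure properties from the product and the projection swap alone. Your closing remark correctly identifies the one genuine subtlety -- that the product degrades the gap, so error reduction must be applied \emph{before} composing -- which is exactly why the paper also begins by invoking Proposition~\ref{th:error_reduction}.
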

\begin{proof}
  Let $\A$ and $\B$ be automata recognizing $L_A$ and $L_B$ with error bound $\varepsilon$ at most $\frac{1}{4}$ (thanks to Theorem \ref{th:error_reduction}). We define $\C$ and $\D$ such that $f_\C = \frac{1}{2}(f_\A + f_\B)$ and $f_\D = f_\A f_\B$. Let $w\in\Sigma^*$. We study the 4 possible options depending on the membership of $w$ to $L_A$ and $L_B$.
  \begin{itemize}
    \item $w\in L_A, w\in L_B$ {\footnotesize  (i.e. $w\in L_A \cup L_B, w\in L_A \cap L_B$)}   
    $\Rightarrow f_C \geq \frac{3}{4}$ and $f_\D \geq \frac{9}{16}$,
    \item $w\in L_A, w\notin L_B$ {\footnotesize (i.e. $w\in L_A \cup L_B, w\notin L_A \cap L_B$)}
    $\Rightarrow f_C \geq \frac{3}{8}$ and $f_\D \leq \frac{1}{4}$,
    \item $w\notin L_A, w\in L_B$ {\footnotesize (i.e. $w\in L_A \cup L_B, w\notin L_A \cap L_B$)}
    $\Rightarrow f_C \geq \frac{3}{8}$ and $f_\D \leq \frac{1}{4}$,
    \item $w\notin L_A, w\notin L_B$ {\footnotesize (i.e. $w\notin L_A \cup L_B, w\notin L_A \cap L_B$)}
    $\Rightarrow f_C \leq \frac{1}{4}$ and $f_\D \leq \frac{1}{16}$.
  \end{itemize}
  Because $\frac{3}{8} > \frac{1}{4}$ and $\frac{9}{16} > \frac{1}{4}$, $\C$ and $\D$ are deciding $L_A \cup L_B$ and $L_A \cap L_B$ with bounded error.
  
  For the complement one has just to make a copy of $\A$ with accepting states $\overline{E_a}$. The resulting function will be $1-f_\A$, leading to accept the rejected words of $\A$ and vice-versa.
\qed
\end{proof}

\section{Equivalent Forms of Affine Automata}
\label{section:equivalent}
General affine automata are hard to study because of the lack of structure of their transition matrices and state vectors.
We provide here some equivalent forms which have more restrictive properties.
These equivalent forms are useful not only because it provides simpler equivalent models but also because they provide a way understand the power of affine computation. 

The first result is that assuming the initial affine (probabilistic) state as the first deterministic state does not change the power of AfAs (PFAs).

\begin{proposition}
  \label{th:initial_vector}
  Let $\A$ be an affine automaton with $n$ states, there exist $\B$ with $n+1$ states with the initial state $(1,0,\ldots, 0)$ and such that $f_\A = f_\B$.
\end{proposition}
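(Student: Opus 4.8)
The plan is to enlarge the state set of $\A$ by a single fresh \emph{start} state, placed as the first coordinate, and to fold the passage from this deterministic start state to the original initial vector $\vec{v}_0$ into the transition matrices. Write $\A = (E, \Sigma, \{M_x\}, \vec{v}_0, E_a)$ with $n$ states and projection $P$. I identify the new space $\R^{n+1}$ with $\R \oplus \R^n$, where the first coordinate is the new start state and the last $n$ coordinates carry a copy of the old state space; for $\vec{u} \in \R^n$ write $\iota(\vec{u}) = (0, \vec{u}) \in \R^{n+1}$ for the embedding of an old state. The new initial vector is the deterministic state $\vec{v}_0^{\B} = (1, 0, \ldots, 0)$, as required.

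For each $x \in \Sigma$ I would set the transition matrix of $\B$ in block form as
\[
N_x = \begin{pmatrix} 0 & \vec{0}^{\,T} \\ M_x \vec{v}_0 & M_x \end{pmatrix},
\]
so that its first column is $\iota(M_x \vec{v}_0)$ and its remaining $n$ columns are $\iota$ applied to the columns of $M_x$. First I would check that $N_x$ is affine: its first column sums to $\sum_i (M_x \vec{v}_0)_i = 1$ because $M_x \vec{v}_0$ is again an affine vector (an affine matrix applied to an affine vector, as noted in the preliminaries), and each remaining column sums to $1$ because the corresponding column of $M_x$ does. The accepting set of $\B$ is $\{\, i+1 : e_i \in E_a \,\}$, giving a projection $P'$ that acts as $P$ on the last $n$ coordinates and as $0$ on the first; whether the start state itself is declared accepting is immaterial for nonempty inputs.

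The heart of the verification is the pair of identities $N_x\,\iota(\vec{u}) = \iota(M_x \vec{u})$ for every $\vec{u} \in \R^n$ (the first coordinate stays $0$ and the lower block reproduces $M_x \vec{u}$) and $N_x\,\vec{v}_0^{\B} = \iota(M_x \vec{v}_0)$. Reading a nonempty word $w = w_1 \cdots w_m$ and applying these in turn yields $N_w \vec{v}_0^{\B} = \iota(M_{w_m} \cdots M_{w_1} \vec{v}_0) = \iota(\vec{v}_f)$, where $\vec{v}_f$ is the final state of $\A$ on $w$. Since $\iota$ only prepends a zero, we get $|P'\,\iota(\vec{v}_f)| = |P \vec{v}_f|$ and $|\iota(\vec{v}_f)| = |\vec{v}_f|$, whence $f_\B(w) = f_\A(w)$ directly from the definition of the accepting value.

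The one delicate point — and the main obstacle — is the empty word. On $\varepsilon$ no transition is applied, so the final state of $\B$ is forced to be $(1,0,\ldots,0)$, giving $f_\B(\varepsilon) \in \{0,1\}$ according to the membership of the start state, whereas $f_\A(\varepsilon) = |P\vec{v}_0| / |\vec{v}_0|$ may be any value in $[0,1]$. I would address this either by appealing to the paper's input convention (if at least one symbol, such as an end-marker, is always read, the construction above already covers the shortest input), or by observing that the membership of the new start state can be chosen to match $f_\A(\varepsilon)$ in the extremal cases, and that modifying the value on the single word $\varepsilon$ does not change the recognized language once the cutpoint is adjusted via Proposition~\ref{th:cutpoint_shift}.
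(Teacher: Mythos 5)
Your construction is exactly the one in the paper: adjoin a fresh start state as the first coordinate, take transition matrices whose first column is $M_x\vec{v}_0$ and whose lower-right block is $M_x$, and verify $N_w\vec{v}_0^{\B}=\iota(M_w\vec{v}_0)$ so that the accepting value is unchanged. Your caveat about the empty word is a legitimate edge case that the paper's proof silently ignores, but it does not alter the substance; the two arguments are essentially identical.
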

\begin{proof}
  Let $\mathcal A = (E, \Sigma, \{A_x\}, \vec{v}_0, E_a)$. Then,  
  $\B = (E\cup \{e'\}, \Sigma, \{B_x\}, \vec{v}'_0, E_a)$, with
  $\vec{v}'_0 = (1,0,\ldots, 0)^T$ and  
  $ B_x =  
  \left( 
    \begin{array}{c|ccc}
       0 & 0 & \cdots  & 0 \\
       \hline
        \\
        A_x \vec{v}_0 & & A_x\\
        \\ 
    \end{array} 
  \right)
  .
  $
  Thus we can deduce $f_\B = f_\A$ from
  $ B_w \vec{v}'_0 = B_{w_n}\ldots B_{w_2} B_{w_1} \vec{v}'_0 = 
  \left( 
    \begin{array}{c|ccc}
       0 & 0 & \cdots & 0 \\
       \hline
        \\
        A_w \vec{v}_0 & & A_w\\
        \\ 
    \end{array} 
  \right)
  \begin{pmatrix}
    1\\ 0 \\ \vdots \\ 0
  \end{pmatrix}
  =
  \begin{pmatrix}
    0\\  \\ A_w \vec{v}_0 \\ \\
  \end{pmatrix}
  . $
\qed
\end{proof}
Then we prove that one could also assume that all state vectors and transition matrices have coefficients only in $[-1, 1]$.
\begin{proposition}
  \label{th:bounded_cutpoint}
  Any language in $ \AfL $ can be recognized by a AfA $\B$ with cutpoint $ \frac{1}{2} $ such that each entry of affine states during the computation is always in $ [-1,1] $.
\end{proposition}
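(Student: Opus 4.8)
The plan is to turn a given affine automaton for $L$ into an equivalent one whose reachable states never leave $[-1,1]$ coordinatewise, proceeding in three stages. First I would normalize the presentation. Applying Proposition~\ref{th:cutpoint_shift}, I replace the original cutpoint by $\frac12$, so that $w\in L \iff f(w)>\frac12$; concretely this says the total accepting absolute mass of the final state exceeds the total rejecting absolute mass. Then, applying Proposition~\ref{th:initial_vector}, I may assume the initial state is the deterministic vector $(1,0,\dots,0)$, whose largest absolute entry is $1$. Call the resulting automaton $\A=(E,\Sigma,\{A_x\},\vec{v}_0,E_a)$ with projection $P$ onto $E_a$.

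Second, I would control the growth of the state. Set $K=\max\bigl(1,\ \max_x\max_i\sum_j |(A_x)_{ij}|\bigr)$, the larger of $1$ and the maximal absolute row sum. Since $\|A_x\vec{u}\|_\infty\le K\|\vec{u}\|_\infty$, the rescaled vector $\tfrac{1}{K^{|w|}}A_w\vec{v}_0$ has all entries in $[-1,1]$ for every $w$ (using $\|\vec{v}_0\|_\infty=1$). Because the acceptance ratio is invariant under positive scaling, it would suffice to build an affine automaton whose ``$E$-block'' applies $\tfrac1K A_x$ at each step. The difficulty is that $\tfrac1K A_x$ is not affine: its columns sum to $\tfrac1K<1$, so the missing mass $1-\tfrac1K$ per column must be parked somewhere without corrupting the acceptance value.

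The key step, and the main obstacle, is to park this leftover mass \emph{symmetrically}. Sending it all to a single garbage state fails badly: that state would accumulate mass $1-K^{-|w|}\to 1$, swamping the denominator and forcing $f_\B(w)\to 0$ on long words. Instead I would add two garbage states $r^+\in E_a$ and $r^-\notin E_a$, route exactly $\tfrac12(1-\tfrac1K)$ of every original column into each of them, and give $r^+$ and $r^-$ self-loops of weight $1$. Together with the scaled block $\tfrac1K A_x$, this makes each column of the new matrix $B_x$ sum to $1$ and keeps every entry of $B_x$ in $[-1,1]$ (indeed $\tfrac1K|(A_x)_{ij}|\le 1$ since each entry is at most its absolute row sum). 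A short induction then shows that after reading $w$ the $E$-block carries $\tfrac1{K^{|w|}}A_w\vec{v}_0$ while $r^+$ and $r^-$ carry the \emph{equal} mass $\tfrac12\bigl(1-K^{-|w|}\bigr)\in[0,\tfrac12]$; hence every entry of the state of $\B$ lies in $[-1,1]$, as required.

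Finally I would verify that $\B$ recognizes $L$ with cutpoint $\frac12$. Because the two garbage contributions are identical, they cancel in the comparison of accepting against rejecting mass: the accepting mass of $\B$ exceeds its rejecting mass iff $\tfrac1{K^{|w|}}|PA_w\vec{v}_0|>\tfrac1{K^{|w|}}|(I-P)A_w\vec{v}_0|$, i.e. iff $f_\A(w)>\frac12$, i.e. iff $w\in L$. This is exactly why the reduction to cutpoint $\frac12$ in the first stage is indispensable: the cancellation only occurs at $\frac12$, and it is precisely this cancellation that lets the construction bound the states (and, as a bonus, the transition entries) in $[-1,1]$ without needing to preserve $f$ itself.
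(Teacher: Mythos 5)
Your proposal is correct and follows essentially the same strategy as the paper's proof: scale the original transition block down by a uniform factor, add two garbage states (one accepting, one rejecting) with weight-$1$ self-loops that each absorb exactly half of the mass lost per column, and observe that the two equal garbage contributions cancel in the comparison at cutpoint $\frac12$. The only difference is your choice of scaling constant (the maximal absolute row sum $K$ rather than the paper's $kC$ with $C$ the largest absolute entry), which is a slightly tighter but otherwise equivalent bound.
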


\begin{proof}
  Let $\A = (E=\{e_1, \ldots, e_k\}, \Sigma, \{A_x\}, v_0=(1,0,\ldots,0)^T, E_a)$ be an AfA such that $w\in L \Leftrightarrow f_\A(w) > \frac{1}{2}$, and $C = \max_{x,i,j} |(A_x)_{i,j}|$. Then, $\B$ is as follows:
  \[ \B = (E\cup \{e_{n+1}, e_{n+2}\}, \Sigma, \{B_x\}, \vec{v}'_0, E_a \cup \{e_{n+1}\}) ~~ \mbox{ with} \]
  \[ B_x = 
  \frac{1}{2kC} \left( 
    \begin{array}{ccc|cc}
       &     && 0 & 0             \\
       & 2A_x && \vdots & \vdots  \\
       &     && 0 & 0             \\
       \hline
       kC-1& \hdots & kC-1 & 2kC & 0 \\
       kC-1& \hdots & kC-1 & 0   & 2kC
    \end{array} 
  \right)  \mbox{ and } \vec{v}'_0 = 
  (1,0,\ldots,0 )^T 
  .\]
Then, with $w=w_1 \cdots w_n$, we can deduce that
	{\small
  \[ B_w = B_{w_n}\cdots B_{w_2}B_{w_1} = 
  \frac{1}{2(kC)^n}
  \left( 
  \begin{array}{ccc|cc}
       &     && 0  & 0      \\
       & 2 A_w && \vdots & \vdots  \\
       &     && 0 & 0       \\
       \hline
       (kC)^n-1 & \hdots & (kC)^n-1 & 2(kC)^n & 0 \\
       (kC)^n-1 & \hdots & (kC)^n-1 & 0       & 2(kC)^n
    \end{array} 
  \right)
  ,\] }
  which gives the final values of the states:
  \[\vec{v}'_f = B_w \vec{v}'_0 = 
  \frac{1}{(kC)^n}
  \begin{pmatrix}
   \vdots  \\
   \vec{v}_f     \\
   \vdots              \vspace{0.6em} \\
   \frac{(kC)^n -1}{2} \vspace{0.7em} \\
   \frac{(kC)^n -1}{2}
  \end{pmatrix}
  .\]
  Since $|(\vec{v}_f)_i| \leq k^{n-1}C^n$, it is clear that $|(\vec{v}'_f)_i| \leq [-1,1]$: the values of the states are bounded. 
  Now, one has
  \[ f_\B = \frac{|PA_w\vec{v}_0| + \frac{(kC)^n -1}{2}}{|A_w\vec{v}_0| + (kC)^n -1} ,\]
  and so,
  \begin{align*}
    w\in L \Leftrightarrow f_\A > \frac{1}{2} 
    &\Leftrightarrow |PA_w\vec{v}_0| > \frac{1}{2} |A_w\vec{v}_0| \\
    &\Leftrightarrow |PA_w\vec{v}_0| + \frac{(kC)^n -1}{2} > \frac{1}{2} \left( |A_w\vec{v}_0| + (kC)^n -1 \right) \\
    &\Leftrightarrow f_\B > \frac{1}{2}
  .
  \end{align*}
\qed
\end{proof}
\section{The first languages shown to be not in $\BAfL$}
\label{sec:non-affine}
This part is dedicated to prove that some languages are not recognizable by affine automata.
This is an adaptation of the proof of Turakainen \cite{Turakainen1981} for non-stochastic languages.
All the difficulty of exhibiting a non-affine language relies in the fact that a large majority of non-stochasticity proof are based on the linearity of the automaton, which is not the case in the affine case.
This proof however, is more based on some ``regularity'' induced by the matrix-based operations, and number theoretic properties of languages like $Prime$. 
Hence it was possible to adapt it for the affine case, where the only non-linear operation is the final projection.

 Let $L\subseteq a^*$ be a unary language. We call \textbf{lower density} of $L$ the limit
  \[ \underline{dens}(L) = \liminf_{n\rightarrow \infty} \frac{\left|\{a^k\in L ~|~ k\leq n\} \right|}{n+1} .\]
  Let $(\vec{x}_n)$ be a sequence of vectors in $\R^k$ and $I=[a_1, b_1) \times \cdots  \times [a_k, b_k)$ be an ``interval''. We define $C(I,n)$ as $C(I, n) = \left|\{ \vec{x}_i \mod 1 \in I ~|~ 1\leq i \leq n \}\right|$. 
  
  We say that \textbf{$(\vec{x}_n)$ is uniformly distributed mod 1} if and only if for any $I$ of such type,
  \[ \lim_{n\rightarrow \infty} \frac{C(I,n)}{n} = (b_1-a_1)\cdots(b_k-a_k) .\]


\begin{proposition}
  \label{th:non_affine}
  If $L\subseteq a^*$ satisfies the following conditions:
  \begin{enumerate}
    \item \label{cond:densL} \underline{dens}(L) = 0.
    \item \label{cond:ud}   For all $Q\in\N^*$, there exist $h\in\N$ and an infinite sequence $(n_i)\in\N^\N$ such that 
    $a^{h+n_iQ}\subseteq L$ and for any irrational number $\alpha$, the sequence $\left( (h+n_iQ)\alpha \right)_{i\in\N}$ is uniformly distributed mod 1.
  \end{enumerate}
  Then $L$ is non-affine ($L\notin \BAfL$).
\end{proposition}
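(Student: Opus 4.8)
The plan is to argue by contradiction, exploiting the fact that over a unary alphabet the whole computation is governed by a single affine matrix. Suppose $L\in\BAfL$. By the cutpoint‑shift result (Proposition~\ref{th:cutpoint_shift}) and the error‑reduction result (Proposition~\ref{th:error_reduction}) we may assume an AfA $\A=(E,\{a\},\{M\},\vec v_0,E_a)$ recognizing $L$ with cutpoint $\tfrac12$ and an isolation gap $\delta>0$: $f_\A(a^n)\ge\tfrac12+\delta$ when $a^n\in L$ and $f_\A(a^n)\le\tfrac12-\delta$ when $a^n\notin L$, where $f_\A(a^n)=|PM^n\vec v_0|/|M^n\vec v_0|$ and $P$ is the projection onto $E_a$. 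The goal is to show that these two inequalities cannot both be enforced along the arithmetic progressions produced by the hypotheses.

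First I would carry out a spectral asymptotic analysis of $M^n\vec v_0$. Writing $M$ in Jordan form, each coordinate of $M^n\vec v_0$ is a finite sum $\sum_\ell p_\ell(n)\lambda_\ell^{\,n}$ with $p_\ell$ polynomial and $\lambda_\ell$ the eigenvalues of $M$. Let $\rho$ be the spectral radius and $d$ the largest degree occurring among the polynomials attached to eigenvalues of modulus $\rho$. Dividing by $\rho^n n^d$, only the top‑order contributions of the dominant eigenvalues survive, and since $M$ is real their conjugate pairs combine into real trigonometric polynomials in the arguments $\theta_j=\arg\lambda_j$ of the dominant eigenvalues. Consequently both $|M^n\vec v_0|/(\rho^n n^d)$ and $|PM^n\vec v_0|/(\rho^n n^d)$ converge, up to $o(1)$, to fixed continuous functions $G(n\vec\theta)$ and $H(n\vec\theta)$ on a torus, so that $f_\A(a^n)=\Phi(n\vec\theta)+o(1)$ with $\Phi=H/G\in[0,1]$. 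The function $G$ is a sum of absolute values of real‑analytic functions, so its zero set has measure zero; thus $\Phi$ is bounded and Riemann integrable. If there is no genuine oscillation (a single real dominant eigenvalue) then $f_\A(a^n)$ tends to a constant and the two inequalities are contradictory at once; the substantive case is when the $\theta_j$ are present.

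Next I would bring in $Q$ and equidistribution. The arguments $\theta_j$ split into rational multiples of $2\pi$ (roots of unity) and irrational ones. Choosing $Q$ to be a common denominator of the rational ones and restricting to a residue class $n\equiv h\pmod Q$ freezes the corresponding angular parts, leaving only irrational angles oscillating. Condition~\ref{cond:ud} supplies such an $h$ together with a subsequence $(n_i)$ for which $a^{h+n_iQ}\in L$ and $\big((h+n_iQ)\alpha\big)_i$ is uniformly distributed mod~$1$ for every irrational $\alpha$; by Weyl's criterion this yields equidistribution of $\big((h+n_iQ)\vec\theta\big)$ on the relevant (sub)torus, and hence, since $\Phi$ is Riemann integrable, $\tfrac1N\sum_{i\le N}f_\A(a^{h+n_iQ})\to\mu:=\int\Phi$. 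As every sampled word lies in $L$, each term is $\ge\tfrac12+\delta$, forcing $\mu\ge\tfrac12+\delta$.

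Finally I would derive the contradiction from Condition~\ref{cond:densL}. The full progression $\big((h+mQ)\alpha\big)_m$ is equidistributed for every irrational $\alpha$ by Weyl's theorem, so $\tfrac1M\sum_{m\le M}f_\A(a^{h+mQ})$ converges to the \emph{same} limit $\mu$. But $\underline{dens}(L)=0$ gives a sequence of cut‑offs along which the number of accepted lengths up to $N$ is $o(N)$; since this count dominates the number of accepted $m\le M$ in the progression (the bounded factor $Q$ is absorbed), the accepted terms form a vanishing fraction there, and as $0\le f_\A\le1$ the average is at most $\tfrac12-\delta+o(1)$, whence $\mu\le\tfrac12-\delta$. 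This contradicts $\mu\ge\tfrac12+\delta$, so $L\notin\BAfL$. I expect the main obstacle to be the rigorous passage from equidistribution to the integral $\mu$: the approximation $f_\A(a^n)=\Phi(n\vec\theta)+o(1)$ is not uniform near the zero set of $G$, which must be absorbed by a small‑neighbourhood estimate using $0\le f_\A\le1$, and possible rational dependencies among the irrational $\theta_j$ force the equidistribution (and the integral defining $\mu$) to be taken on the genuine subtorus rather than the full torus — this is exactly the number‑theoretic heart inherited from Turakainen's argument.
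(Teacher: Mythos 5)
Your proposal is correct in its overall logic and shares the paper's first half almost verbatim: Jordan decomposition of the single matrix $M$, extraction of the dominant factor $\rho^n n^d$, reduction of $f_\A(a^n)$ to a continuous function $\Phi$ on a torus up to an $o(1)$ error, and the use of $Q$ to freeze the rational eigenvalue arguments. Your endgame, however, is genuinely different. The paper argues pointwise: it first shows $G(h+nQ)\ge\tfrac12$ for large $n$ (otherwise density of the orbit would produce rejected words inside the accepted progression), then splits into the all-rational case (where $G$ is periodic, so an entire arithmetic progression lies in $L$) and the irrational case (where continuity yields an interval of the torus on which $G>\tfrac12+\tfrac{\varepsilon}{2}$, and equidistribution converts this into positive lower density of $L$), both contradicting condition~\ref{cond:densL}. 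You instead compare two Ces\`aro means of $f_\A$: along the accepted subsequence $(h+n_iQ)$ the mean is at least $\tfrac12+\delta$, along the full progression $(h+mQ)$ it is at most $\tfrac12-\delta+o(1)$ by zero lower density, and both converge to the same integral $\int\Phi$ over the orbit closure. This buys a uniform treatment of the rational and irrational cases (no case split) and avoids the paper's delicate ``$G$ not identically $\tfrac12$'' step; the price is that you must justify convergence of the averages, i.e.\ Riemann integrability of $\Phi$ on the subtorus and absorption of the non-uniform error near the zero set of the limiting denominator, which you correctly flag and which is absorbable since $0\le f_\A\le 1$. One point you should make explicit rather than leave as a caveat: for the \emph{subsequence} average to converge to the Haar integral you need Weyl's criterion for every character nontrivial on the orbit closure, and condition~\ref{cond:ud} only controls characters $\vec m$ with $\vec m\cdot\vec\theta$ irrational; characters with $\vec m\cdot\vec\theta$ rational but $Q\,\vec m\cdot\vec\theta\notin\Z$ are uncontrolled along $(n_i)$. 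This is repaired by enlarging $Q$ to a common multiple of the finitely many denominators arising from the lattice $\{\vec m : \vec m\cdot\vec\theta\in\Q\}$ (legitimate because condition~\ref{cond:ud} is quantified over all $Q$); note that the paper's own equidistribution step, which asserts $C(I,h+n_iQ)/(h+n_iQ)\to\prod_k(y_k-x_k)$, silently assumes the analogous joint equidistribution.
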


\begin{proof}
Let's assume for contradiction that $L\in \BAfL$. Then there exists an affine automaton $A$ with $s$ states such that 
\[ f_A(a^n) = \frac{|PM^n\vec{v}|}{|M^n\vec{v}|} \]
and there exists $\varepsilon > 0$ such that 
\begin{itemize}
    \item $\forall w\in L, f_A(w) \geq 1-\varepsilon$,
    \item $\forall w\notin L, f_A(w) \leq \varepsilon$.
  \end{itemize}
Note that
\[ |M^n\vec{v}| 
=   \sum_{i=1}^s \left| (M^n \vec{v})_i \right| 
\geq \left|\sum_{i=1}^s (M^n \vec{v})_i \right|
= 1 
\text{ (triangle inequality).} \]
Hence the denominator of $f_A$ is never $0$, and so $f_A$ is continuous.

Using the Jordan decomposition $M=PJP^{-1}$, one has $M^n = PJ^nP^{-1}$. So the coordinates $\vec{v}_i$ of $M^n\vec{v}$ have the form
\begin{equation}
  \label{eq:vi_sum}
  \vec{v}_i = \sum_{k=1}^s p_{ik}(n)\lambda_k^n 
\end{equation}
where $\lambda_i$ are the eigenvalues of $M$ and $p_{ik}$ are polynomials of degree less than the degree of the corresponding eigenvalue.
Let $\lambda_i = |\lambda_i| e^{2i\pi\theta_i}$, we assume $|\lambda_1| = \cdots  = |\lambda_{s'}| > |\lambda_{s'+1}| \cdots$. Let $\lambda = |\lambda_1|$ be the largest module of all eigenvalues and $r$ be the maximum degree of all polynomials $p_{ik}$, where $k\leq s'$.
Then, one can use (\ref{eq:vi_sum}) to write 
\[ 
  |M^n\vec{v}| = \sum_{i\in E} |\vec{v}_i| = \lambda^n n^r \left( \sum_{i\in E} \left|\sum_{k=1}^{s'} a_{ik}e^{2i\pi n\theta_k} \right| + g_E(n)\right)
\]
where $a_{ik}$ is the coefficient of degree $r$ of $p_{ik}$ (note that one can have $a_{ik} = 0$ for some $a,k$), and $g_E$ a function such that $\lim_{n\rightarrow\infty} g_E(n) = 0$. Similarly, 
\[ 
  |PM^n\vec{v}| = \sum_{i\in E_a} |\vec{v}_i| = \lambda^n n^r \left( \sum_{i\in E_a} \left|\sum_{k=1}^{s'} a_{ik}e^{2i\pi n\theta_k} \right| + g_{E_a}(n)\right)
.\]

Now let $F(n) = f(a^n)$. Using the previous equations, one has
\begin{align*}
  F(n)
  &= \frac{|PM^n\vec{v}|}{|M^n\vec{v}|} \\
  &= \frac{\lambda^n n^r \left( \sum_{i\in E_a} \left|\sum_{k=1}^{s'} a_{ik}e^{2i\pi n\theta_k} \right| + g_{E_a}(n)\right)}
          {\lambda^n n^r \left( \sum_{i\in E} \left|\sum_{k=1}^{s'} a_{ik}e^{2i\pi n\theta_k} \right| + g_E(n)\right)} \\
  &= \frac{\sum_{i\in E_a} \left|\sum_{k=1}^{s'} a_{ik}e^{2i\pi n\theta_k} \right| + g_{E_a}(n)}
          {\sum_{i\in E} \left|\sum_{k=1}^{s'} a_{ik}e^{2i\pi n\theta_k} \right| + g_E(n)} .
\end{align*}

We define 
\[
  G(n) = \frac{\sum_{i\in E_a} \left|\sum_{k=1}^{s'} a_{ik}e^{2i\pi n\theta_k} \right|}
          {\sum_{i\in E} \left|\sum_{k=1}^{s'} a_{ik}e^{2i\pi n\theta_k} \right|} .
\]
As $\lim_{n\rightarrow\infty} g_{E_a}(n) = 0$ and $\lim_{n\rightarrow\infty} g_E(n) = 0$, one has $G(n) \sim F(n)$, and so,
\begin{equation}
  \label{eq:F-G=0}
  \lim_{n\rightarrow \infty} |F(n) - G(n)| = 0 .
\end{equation}

We define $A = \{ k ~|~ 1\leq k \leq s', \theta_k\notin\Q \}$ the indices of the ``first'' eigenvalue angles that are not rational.
Let $Q$, $h$ and the sequence $(n_i)$ be as in the statement. 
Using the periodic behaviour induced by rational angle of eigenvalues, and by taking a subsequence of the initial one, one can also assume that $(n_i)$ is such that
\[
  G(h + n_i Q) = \frac
                {\sum_{i\in E_a} \left|\sum_{k\in A} a_{ik}e^{2i\pi (h + n_i Q)\theta_k} + c \right|}
                {\sum_{i\in E}   \left|\sum_{k\in A} a_{ik}e^{2i\pi (h + n_i Q)\theta_k} + d \right|} 
\]
with $c$, $d$ some constants.

By assumption, for all $k\in A$, the sequence $\left((h+n_iQ)\theta_k\right)_i$ is uniformly distributed modulo 1. 
The consequence is that the values $e^{2i\pi (h + n_i Q)\theta_k}$ are dense in the unit circle. 
If for some $n$, $G(h+nQ) < \frac{1}{2}$, there exists $\varepsilon > 0$ such that $G(h+nQ) \leq \frac{1}{2} - \varepsilon$.
Then, thanks to the density argument, there are arbitrarily large values of $i$ for which $G(h+n_iQ) \leq \frac{1}{2} - \frac{\varepsilon}{2}$.
Since for $i$ sufficiently large, $|F(h+n_iQ)-G(h+n_iQ)| \leq \frac{\varepsilon}{2}$ (using (\ref{eq:F-G=0})), one has $F(h+n_iQ) \leq \frac{1}{2}$, and so $a^{h+n_iQ}\notin L$, contradicting condition \ref{cond:ud} of the statement.

\medskip

Therefore, $G(h+nQ) \geq \frac{1}{2}$ for large enough $n$. Because $G$ is not identically equal to $\frac{1}{2}$ (if it is the case, $F$ would be as close to $\frac{1}{2}$ as one wants, which is impossible since $L\in\BAfL$), again using density, there must be some $\varepsilon > 0$ and $k_0$ such that $G(h+k_0Q) \geq \frac{1}{2}+\epsilon$. 

First if $A = \emptyset$, it means that all the angles of the eigenvalues $\theta_1,  \dots, \theta_{s'}$ are rational. We can then write them as $\theta_k = \frac{l_k}{m_k}$. Then $G(n)$ takes a finite number of values, and these values only depend on $(n\mod m_1), \dots, (n \mod m_{s'})$.
Let's call $k_1 = h+k_0Q$ the number where $G$ is larger than $\frac{1}{2}$: $G(n_1) > \frac{1}{2}$. 
$G$ has the same value for all $n\in Z=\{ k_1 + k m_1\cdots m_{s'} |k\in\N \}$ (because for $n$ in this set, the values of all $(n\mod m_1), \dots, (n \mod m_{s'})$ are the same). 
Then, thanks to (\ref{eq:F-G=0}), one has, for $n\in Z$ sufficiently large, $F(n)>\frac{1}{2}$, so $\{ a^n ~|~ n\in Z, n\geq n_1 \} \subseteq L$. 
And because 
$
  \left|\{ a^n ~|~ n\in Z, n\geq n_1 \}\right|
  \sim \frac{n}{m_1 \cdots m_{s'}} 
$,
one has $\underline{dens}(L) > 0$, which contradicts condition \ref{cond:densL} of the statement.

Next, if $A\neq \emptyset$.
Let
\[
R((x_k)_{k\in A}) = \frac
                {\sum_{i\in E_a} \left|\sum_{k\in A} a_{ik} x_k + c \right|}
                {\sum_{i\in E}   \left|\sum_{k\in A} a_{ik} x_k + d \right|}   .
\]
Note that $G(h+n_i Q) = R((e^{2i\pi (h + n_i Q)\theta_k})_{k\in A})$. Then, because the sequences $((h+n_iQ)\theta_k)_i$ are uniformly distributed modulo 1, it follows that any value obtained by the function $R((e^{2i\pi y_k})_{k\in A})$ can be approximated by some $G(h+n_i Q)$ with arbitrary precision.
The function $R$ is continuous, therefore there exists an interval $I=(x_1, y_1, ...) = ((x_k, y_k))_{k\in A}$ on which $R((x_k)) > \frac{1}{2} + \frac{\varepsilon}{2}$. So, if $n_i$ is large enough and satisfies
\[
  \left( (h+n_iQ)\theta_1 \mod 1, \dots  \right) = \left( (h+n_iQ)\theta_k \mod 1 \right)_{k\in A} \in I  ,
\]
then $G(h+n_iQ) > \frac{1}{2} + \frac{\varepsilon}{2}$, which implies $F(h+n_iQ) > \frac{1}{2}$ and hence $a^{h+n_iQ}\in L$. Now we just have to prove that the sequence $(h+n_iQ)$ is ``dense enough'' to have $\underline{dens}(L) > 0$, contradicting again condition \ref{cond:densL}.\\
Because of uniform distribution imposed by condition \ref{cond:ud}, one has
\[
  d = \lim_{i\rightarrow\infty} \frac{C(I, h+n_iQ)}{h+n_iQ} = \prod_{k\in A} (y_k - x_k)
\]
And so for $i$ large enough, $\frac{C(I, h+n_iQ)}{h+n_iQ} \geq \frac{d}{2}$, with $a^{h+n_iQ}\in L$, implying $\underline{dens}(L) > 0$. We have proved that $L$ cannot be affine.
\qed
\end{proof}


Turakainen \cite{Turakainen1981} proved that  $ Prime = \{a^p \mid p \mbox{ is prime}\} $ and $ Poly(q) = \{ a^{q(n)} \mid n \in \mathbb{N}, q(n) \geq 0  \} $ (where $ q $ is any polynomial of degree $>2$ with non-negative coefficients) both satisfy the two conditions of Theorem \ref{th:non_affine}. Hence they are not in $ \BAfL $ .

\begin{coro}
  \label{coro:poly_non_affine}
  $ Prime \notin \BAfL $ and $ Poly(q) \notin \BAfL $.
\end{coro}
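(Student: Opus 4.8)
The statement is a corollary of Theorem~\ref{th:non_affine}: it suffices to check that each of $Prime$ and $Poly(q)$ satisfies the two hypotheses, namely $\underline{dens}(L)=0$ (condition~\ref{cond:densL}) and the equidistribution condition~\ref{cond:ud}. The verifications draw on standard analytic number theory, elementary for $Poly(q)$ and considerably deeper for $Prime$; once they are in place, Theorem~\ref{th:non_affine} immediately yields $Prime\notin\BAfL$ and $Poly(q)\notin\BAfL$.

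First I would dispose of $Poly(q)$, writing $d=\deg q$ (so $d\geq 2$ under the stated hypothesis) and assuming $q$ has non-negative integer coefficients with leading coefficient $a_d>0$. For condition~\ref{cond:densL}, the number of $k\leq N$ with $a^k\in Poly(q)$ is at most the number of $m$ with $q(m)\leq N$, which is $O(N^{1/d})$; dividing by $N+1$ gives a bound $O(N^{1/d-1})\to 0$, so the lower density is $0$. For condition~\ref{cond:ud}, fix $Q\in\N^*$, pick any $m_0$, and set $h=q(m_0)$, $m_i=m_0+Qi$, and $n_i=\bigl(q(m_i)-h\bigr)/Q$. Since $q$ has integer coefficients and $m_i\equiv m_0\pmod Q$, each $n_i$ is a non-negative integer and $h+n_iQ=q(m_i)$, so $a^{h+n_iQ}\in Poly(q)$. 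Moreover $(h+n_iQ)\alpha=\alpha\,q(m_0+Qi)$ is, as a function of $i$, a polynomial of degree $d$ whose leading coefficient $a_dQ^d\alpha$ is irrational; Weyl's equidistribution theorem then guarantees that $\bigl((h+n_iQ)\alpha\bigr)_i$ is uniformly distributed mod~$1$ for every irrational $\alpha$. This settles both conditions for $Poly(q)$.

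For $Prime$ the density is handled by the Prime Number Theorem: $|\{a^k\in Prime : k\leq N\}|=\pi(N)\sim N/\ln N$, so the ratio to $N+1$ tends to $0$ and condition~\ref{cond:densL} holds. For condition~\ref{cond:ud}, given $Q$ I would choose $h$ with $\gcd(h,Q)=1$ (e.g.\ $h=1$); by Dirichlet's theorem the progression $h+nQ$ contains infinitely many primes, and letting $(n_i)$ enumerate the corresponding indices yields $a^{h+n_iQ}\in Prime$ for all $i$. The sequence $(h+n_iQ)$ then runs through the primes $p\equiv h\pmod Q$ in increasing order, so the required statement is that $(\alpha p)$ is uniformly distributed mod~$1$ as $p$ ranges over these primes, for every irrational $\alpha$. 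This is exactly Vinogradov's theorem on the equidistribution of $\alpha p\bmod 1$ over the primes, in its version restricted to an arithmetic progression.

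The main obstacle is precisely this last point: the equidistribution of $(\alpha p)$ over primes in a fixed residue class is not elementary and rests on nontrivial exponential-sum estimates (Vinogradov's method, via the circle method), in contrast to the $Poly(q)$ case where Weyl's criterion applies directly to an explicit polynomial in $i$. These are the facts established by Turakainen~\cite{Turakainen1981}; granting them, conditions~\ref{cond:densL} and~\ref{cond:ud} are verified for both languages and Theorem~\ref{th:non_affine} gives the corollary.
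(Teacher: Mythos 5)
Your proposal is correct and follows exactly the route the paper takes: the paper simply notes that Turakainen~\cite{Turakainen1981} verified conditions~\ref{cond:densL} and~\ref{cond:ud} for $Prime$ and $Poly(q)$ and then invokes Proposition~\ref{th:non_affine}, which is precisely your argument with the number-theoretic verifications (counting/PNT for density, Weyl and Vinogradov for equidistribution) spelled out in more detail than the paper gives.
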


\section{Conclusion}
\label{sec:conc}

In this paper we demonstrated that even if they are strictly more powerful, bounded-error languages of affine automata share stability properties with regular languages (which are bounded-error languages of stochastic automata).

We also showed that the computational power of affine automata does not come alone from the unboundedness state vector set: the general model of unbounded state vector set can always be simulated with a bounded state vector set.
Hence some of the computational power of affine automata comes from the nonlinear nature of the final projection,  at least in the case of unbounded-error computation.

\bibliographystyle{splncs03}
\bibliography{affine}

\end{document}